\documentclass[english,10pt,letters]{IEEEtran}

\ifCLASSINFOpdf
\else
\fi
\hyphenation{op-tical net-works semi-conduc-tor}
\usepackage{caption}
\usepackage{epsfig}
\usepackage{amsthm,amssymb,graphicx,graphicx,multirow,amsmath,color,mathtools}
\usepackage{tikz}
\usepackage{pgfplots}

\usepackage{epstopdf}
\usetikzlibrary{patterns}

\usepackage{verbatim}

\def\E{\mathbb{E}}

\def\ie{{\em i.e.}}

\def\scriptscriptstyle{}


\def\d{\mathrm{d}}

\def\nv{\sigma^2_n}
\def\iv{\sigma^2_i}

\def\ti{T_{\scriptscriptstyle i}}
\def\ts{T_{\scriptscriptstyle s}}

\def\sinr{\mathtt{SINR}}

\def\snr{\mathtt{SNR}}

\def\awgn{\mathtt{AWGN}}
\def\AWGN{\mathtt{AWGN}}
\def\pdf{\mathtt{PDF}}

\def\Po{\bar{P}}

{}
  
  \newtheorem{theorem}{Theorem}

\usepackage{subcaption}

\begin{document}
%
\title{Adaptive Modulation  with Impulsive Interference}
%
%
%
 
\author{\IEEEauthorblockN{Sudharsan Parthasarathy and  Radha Krishna Ganti}\\
\IEEEauthorblockA{Department of Electrical Engineering\\
Indian Institute of Technology Madras\\
Chennai, India 600036\\
\{sudharsan.p, rganti\}@ee.iitm.ac.in}
}

\maketitle

\begin{abstract}
In this letter, we analyze  power and rate adaptation in a point-to-point link with  Rayleigh fading and  impulsive interference. We  model the  impulsive interference as a Bernoulli-Gaussian random process.  Adaptation is used to maximize the average spectral efficiency by changing  power and rate of the transmission subject to an average power and  instantaneous probability of error constraints. Without impulsive interference, it is well known that water-filling is  optimal for block fading.  We provide two simple schemes that show that the  conventional water-filling algorithm is not optimal in an impulsive interference channel.
\end{abstract}

%
\IEEEpeerreviewmaketitle

\section{Introduction}
%
%
%
%

In current wireless systems, frequency planning and medium access control protocols help regulate co-channel interference.  Despite these protocols, due to the size, complexity of the networks and device imperfections, interference is a major impediment.  Sometimes, this interference is impulsive in nature \cite{ kapil}. For example, in a cellular system impulsive interference occurs because of partial loading and channel  feedback instability. Also, current wireless networks are interfered by other devices such as microwave ovens that are in close by vicinity and operate in the same frequency band \cite{kamerman} and the  interference generated is impulsive in nature. In such scenarios, Bernoulli-Gaussian arrival process is used to model the impulsive interference process \cite{capacity}. Impulsive noise exists in power line communications and has been analysed a lot in powerline literature in past decade. Zimmermann et al. proposed  the multipath model for a power line channel \cite{zimmer1}. They also analysed and modeled impulsive noise that occurs in power-line communications \cite{zimmer2}. Meng et al. gave a frequency domain approach to model and analysed the effects of noise  on broadband power-line communications \cite{meng}. In \cite{ma}, impulsive interference is analyzed in an OFDM system for power-line communications, but rate and power adaptation are not considered.  In  \cite{mawali},  rate and power adaptation are considered in the context of power-line communications. However the analysis is carried out with an average interference power assumption. 

Present day wireless systems  adapt transmission rate and transmit power to achieve a target probability of error, subject to an average power constraint. In a block fading channel with additive white Gaussian noise, the  signal-to-noise-ratio ($\snr$) is constant across a coherence time interval (coherence time interval is the duration when the channel gain remains constant). Hence a  water-filling  based solution (based on the $\snr$ fed back at the beginning of every coherence time interval) is optimal to adapt the power and the transmission rate \cite{andrea1} in such channels. 

Adaptive modulation in the presence of $\AWGN$  and persistent interference has been  studied in  \cite{taewon}, \cite{anders}. In a two user downlink interference channel, when both the channels experience Rayleigh fading, it was shown  in \cite{taewon} that binary power allocation  performs better than water-filling when ratio of the gain of the interfering path to the gain of the direct path is greater than a threshold. In \cite{anders} it was shown that though binary power allocation is sub-optimal for a system with multiple interfering links, it is preferred as it is computationally simple. However, there has been little understanding of how to adapt the power and modulation rate even in a point-to-point link when the interference is impulsive in nature.

\section{System Model}
\label{sec:basic}

\ifCLASSOPTIONtwocolumn
\begin{figure}[ht]
{
\centering
\begin{tikzpicture}[scale=1.5]
\draw [->] (-0.2,0) -- (4.2,0);
\draw [->] (0,-0.025) -- (0,.8);
\draw (0,-0.05) -- (0,0.05);
\draw (.5,-0.05) -- (.5,0.05);
\draw (1.5,-0.05) -- (1.5,0.05);
\draw (2.5,-0.05) -- (2.5,0.05);
\draw (3.5,-0.05) -- (3.5,0.05);
\draw (4,-0.05) -- (4,0.05);
\draw (1,-0.05) -- (1,0.05);
\draw (2,-0.05) -- (2,0.05);
\draw (3,-0.05) -- (3,0.05);

\node at (0,-0.15) {$\mathbf{0}$};
\node at (.5,-0.15) {$T_s$};
\node at (1,-0.15) {$2T_s$};
\node at (1.5,-0.15) {$3T_s$};
\node at (2,-0.15) {$\mathbf{T_c}$};
\node at (2.5,-0.15) {$5T_s$};
\node at (3,-0.15) {$6T_s$};
\node at (3.5,-0.15) {$7T_s$};
\node at (4,-0.15) {$\mathbf{2T_c}$};
;
\draw   [fill=pink](0,0) rectangle (.5,0.5); 
\draw  [fill=pink] (1.5,0) rectangle (2,0.5); 
 \draw  [fill=pink] (3,0) rectangle (3.5,0.5); 
 
%

\end{tikzpicture}
\caption{Bernoulli interference process: We consider $N=4$ symbols in a coherence time interval. The interference occurs as a block of $T_s$ with probability $p$. The channel estimate of the first symbol of coherence time interval is fed back based on which the transmitter adapts its power and transmission rate.}
\label{fig:bernoulli}
}
\end{figure}
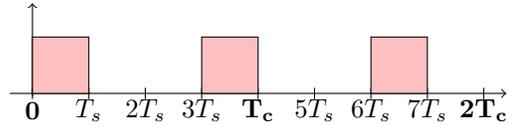

\else
\begin{figure}[ht]
{
\centering
\begin{tikzpicture}[scale=3]
\draw [->] (-0.2,0) -- (4.2,0);
\draw [->] (0,-0.025) -- (0,.8);
\draw (0,-0.05) -- (0,0.05);
\draw (.5,-0.05) -- (.5,0.05);
\draw (1.5,-0.05) -- (1.5,0.05);
\draw (2.5,-0.05) -- (2.5,0.05);
\draw (3.5,-0.05) -- (3.5,0.05);
\draw (4,-0.05) -- (4,0.05);
\draw (1,-0.05) -- (1,0.05);
\draw (2,-0.05) -- (2,0.05);
\draw (3,-0.05) -- (3,0.05);

\node at (0,-0.15) {$\mathbf{0}$};
\node at (.5,-0.15) {$T_s$};
\node at (1,-0.15) {$2T_s$};
\node at (1.5,-0.15) {$3T_s$};
\node at (2,-0.15) {$\mathbf{T_c}$};
\node at (2.5,-0.15) {$5T_s$};
\node at (3,-0.15) {$6T_s$};
\node at (3.5,-0.15) {$7T_s$};
\node at (4,-0.15) {$\mathbf{2T_c}$};
;
\draw   [fill=pink](0,0) rectangle (.5,0.5); 
\draw  [fill=pink] (1.5,0) rectangle (2,0.5); 
 \draw  [fill=pink] (3,0) rectangle (3.5,0.5); 
 
%

\end{tikzpicture}
\caption{Bernoulli interference process: We consider $N=4$ symbols in a coherence time interval. The interference occurs as a block of $T_s$ with probability $p$. The channel estimate of the first symbol of coherence time interval is fed back based on which the transmitter adapts its power and transmission rate.}
\label{fig:bernoulli}
}
\end{figure}
\fi

In this letter, we consider  a fading channel with impulsive interference. We assume that the duration of interference $\ti$ is same as that of symbol period $\ts$, \ie, $\ti =\ts$ as shown in Figure \ref{fig:bernoulli}.  We consider the following discrete time system  model for Bernoulli impulsive interference \cite{monisha},
\begin{equation}
y_m= h_{m} x_m +w_m+i_m, \quad  \forall m \geq 0.
\label{eqn:sys}
\end{equation}
Here $y_m$ is the received signal,  $x_m$ is the transmitted M-QAM symbol, $w_m$ denotes a zero mean Gaussian random variable  of variance $\nv$ and $i_m$  denotes the interference process.  The fading denoted by $h_m$  is an i.i.d process. We assume that the magnitude of the fading $|h_m|$ is Rayleigh distributed and hence $H_m=|h_m|^2$ is exponentially distributed. Without loss of generality, the mean of $H_m$ is assumed to be one. We will use $H$ to denote a generic instance of $H_m$ and it remains constant for the coherence time interval $T_c$.  The average signal power ($\overline{P}$) is also assumed to be one. The interference process is given by 
\begin{align*}
i_m=b_m g_m,
\end{align*}
where $b_m$ is  Bernoulli process and  $g_m$ is a sequence of i.i.d. Gaussian random variables of variance $\iv.$  The Bernoulli process $b_m$ is an i.i.d sequence of zeros and ones with $\mathbb{P}(b_m=0)=1-p$ and $\mathbb{P}(b_m=1)=p.$  The interference process is shown in Figure \ref{fig:bernoulli}.
The receiver feeds back $\sinr$ to the transmitter at the beginning of every coherence time interval. 

The $\sinr$ of symbols not affected by impulsive interference is $\frac{H \overline{P}}{\sigma_n^2}$. As $H$ is exponentially distributed of unit mean and average signal power ($\overline{P}$) is assumed to be one, $\sinr$ is exponentially distributed of mean \[\overline{\gamma}=\frac{1}{\sigma_n^2},\] and its PDF is denoted as $f_n(\gamma)$. Similarly $\sinr$ of symbols affected by impulsive interference is exponentially distributed  with mean \[\frac{1}{\sigma_n^2+\sigma_i^2}=\frac{\overline{\gamma}}{(1+\mu)},\]  where $\mu=\sigma_i^2 / \sigma_n^2$ and the corresponding PDF is denoted as $f_i(\gamma)$. So  the $\sinr$ distribution is  $(1-p)f_n(\gamma)$  $+p f_i(\gamma)$. 
\section{Adaptive modulation in the presence of impulsive interference}
\label{sec:realistic}

The objective is to adapt transmit rate and power instantaneously to maximize average spectral efficiency while keeping the  instantaneous probability of error below a threshold $P_b$ and satisfying the power constraint over a long horizon. Here we focus on M-QAM modulation.
In \cite{andrea1}, the probability of bit error with  M-QAM  constellation can be approximated as 
\begin{align}
P_b(\gamma) = c e^{\frac{-1.5 \gamma}{M-1}},
\label{eqn:error}
\end{align}
where $c$=0.2 and $\gamma$ is the $\sinr$, where $\gamma=\frac{H \overline{P}}{\sigma^2}$,  $H$ is the channel power, $\overline{P}$ is average symbol power and $\sigma^2$ is noise power.
Let $P(\gamma)$  and  $M(\gamma)$ denote the transmit power and modulation scheme chosen based on the $\sinr$  fed back at the beginning of the coherence time interval. On adaptation, the $\sinr$ is $\frac{H P(\gamma)}{\sigma^2}$ which is equivalent to $\frac{\gamma P(\gamma)}{\overline{P}}$ .  So the probability of error is  
\begin{equation}
P_b(\gamma)=c e^{\frac{-1.5 \frac{\gamma P(\gamma)}{\Po}}{ (M(\gamma)-1)}}.
\label{eqn:prob_adaptive}
\end{equation}
 Hence for a target $P_b$, rearranging (\ref{eqn:prob_adaptive})  gives \[M(\gamma)= 1+\frac{k\gamma P(\gamma)}{\Po},\]
where $k=-1.5/ \ln(\frac{P_b}{c})$. 
We want to maximize the average spectral efficiency  $\E_\gamma[\log_{2} M(\gamma)] $
subject to an average power constraint $\E_\gamma[ P(\gamma)] \leq \Po$. 

When $\sinr$ is constant in a coherence time interval, water-filling is the optimal power adaptation scheme \cite{andrea1} and is given by,
\begin{equation}
\frac{kP(\gamma)}{\Po} = \left\{ \begin{array}{rl} 
\frac{1}{\eta} - \frac{1}{\gamma } ; &\mbox{$\gamma \geq \eta$} \\
0 ; &\mbox{$\gamma < \eta$},
 \end{array} \right.
 \label{eqn:opti_power2}
\end{equation}
where the threshold $\eta$ is obtained by solving
 \begin{equation}
\int\limits_{\eta} ^\infty \left(\eta^{-1}-\gamma^{-1}\right)g(\gamma) \d \gamma=k.
\label{eqn:beta_n_solving}
\end{equation}
 The average spectral efficiency is given by
\begin{equation}
\label{eqn:Rwf}
\E_\gamma[\log_{2} M(\gamma)]=  \int\limits_{\eta} ^\infty \log_2 \left(\frac{\gamma}{\eta} \right) g(\gamma) \d \gamma,
\end{equation}
where $g(\gamma)$ denotes the $\pdf$ of the $\sinr$ in the coherence interval.

But optimal power and rate adaptation for a channel where $\sinr$ varies within the coherence time interval is not known in literature to the best of our knowledge. We apply the conventional water-filling algorithm discussed above in an impulsive interference system and compare with two other algorithms to show that conventional water-filling is not optimal with impulsive interference. 

In this letter `outage' is defined as the event when the probability of error of symbol transmitted is greater than the fixed probability of error constraint $P_b$. 

\subsection{Conventional water-filling}
The average spectral efficiency achieved by conventional water-filling in an impulsive interference system is not same as given by (\ref{eqn:Rwf}), where $g(\gamma)$=$(1-p)f_n(\gamma)$ + $p f_i(\gamma)$. This is because all the symbols in a coherence time interval   do not experience the same $\sinr$. Hence the rate and power that are adapted using the $\sinr$ of first symbol of a  coherence time interval can not satisfy the probability of error constraint across all the symbols in a  coherence time interval. 
\begin{theorem}
The average spectral efficiency achieved by conventional water-filling in a channel affected by impulsive interference is \[ (1-p)^2 \int\limits_{\eta} ^\infty \log_2 \left(\frac{\gamma}{\eta} \right) f_n(\gamma) \d \gamma + p \int\limits_{\eta} ^\infty \log_2 \left(\frac{\gamma}{\eta} \right) f_i(\gamma) \d \gamma.\]
\end{theorem}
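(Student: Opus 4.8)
The plan is to track, for a generic transmitted symbol, the actual probability of error induced by the water-filling rate/power assignment and to decide in which interference configurations that symbol escapes outage. The essential modeling point is that the fed-back $\sinr$ is read off the first (pilot) symbol of the coherence block, whereas the data symbol whose error we are controlling sits in a different slot; since the Bernoulli process $b_m$ is i.i.d.\ across slots, the interference state of the feedback symbol and that of the data symbol are independent, each equal to $1$ with probability $p$. This independence is what will ultimately separate the $(1-p)^2$ and $p$ coefficients.

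First I would record the design identity. For $\gamma\geq\eta$ the allocation (\ref{eqn:opti_power2}) gives $kP(\gamma)/\overline{P}=\eta^{-1}-\gamma^{-1}$, so from $M(\gamma)=1+k\gamma P(\gamma)/\overline{P}$ one obtains $M(\gamma)=\gamma/\eta$ and a per-symbol rate $\log_2(\gamma/\eta)$. Substituting this allocation into (\ref{eqn:prob_adaptive}) and using $k=-1.5/\ln(P_b/c)$ shows that the realized error probability equals exactly $P_b$ precisely when the transmission $\sinr$ equals $\gamma P(\gamma)/\overline{P}$, i.e.\ when the noise-plus-interference conditions during the data symbol match those that produced the fed-back $\gamma$. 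Any data symbol whose conditions are \emph{worse} than this overshoots $P_b$ and is in outage; any whose conditions are \emph{better} stays below $P_b$.

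Next I would enumerate the four configurations (feedback clean/interfered $\times$ data clean/interfered) and compare the realized $\sinr$ with the design value $\gamma P(\gamma)/\overline{P}$. Writing $H$ for the block-constant channel gain and eliminating $H$ through the definition of $\gamma$: when the feedback symbol is clean, $\gamma=H\overline{P}/\sigma_n^2$, so a clean data symbol realizes exactly the design $\sinr$ (error $=P_b$, no outage) while an interfered data symbol realizes it scaled by $1/(1+\mu)<1$ (error $>P_b$, outage). When the feedback symbol is interfered, $\gamma=H\overline{P}/(\sigma_n^2+\sigma_i^2)$, so an interfered data symbol again realizes exactly the design $\sinr$ (error $=P_b$) and a clean data symbol realizes it scaled by $(1+\mu)>1$ (error $<P_b$); in both sub-cases there is no outage. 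I expect this case analysis to be the main obstacle, since it is where the asymmetry originates: getting the direction of each scaling right and confirming that the interfered-feedback branch never falls below spec is exactly the nonobvious step.

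Finally I would average the delivered rate over the independent configurations, a symbol contributing $\log_2(\gamma/\eta)$ when it is not in outage and $0$ otherwise. Collecting the surviving events: the clean-feedback branch survives only on the clean-data sub-event, of probability $(1-p)(1-p)=(1-p)^2$ with $\gamma$ drawn from $f_n$; the interfered-feedback branch survives on its entire event, of probability $p$ with $\gamma$ drawn from $f_i$. Taking expectations of $\log_2(\gamma/\eta)$ against $f_n$ and $f_i$ then yields
\[
(1-p)^2\!\int_\eta^\infty\!\log_2\!\Big(\frac{\gamma}{\eta}\Big)f_n(\gamma)\,\d\gamma\;+\;p\!\int_\eta^\infty\!\log_2\!\Big(\frac{\gamma}{\eta}\Big)f_i(\gamma)\,\d\gamma,
\]
which is the claimed average spectral efficiency. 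Here $\eta$ is still the threshold of (\ref{eqn:beta_n_solving}) evaluated against the marginal feedback density $(1-p)f_n+pf_i$ used by the conventional water-filling rule, so no rescaling of $\eta$ enters the final expression.
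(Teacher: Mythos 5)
Your proposal is correct and follows essentially the same route as the paper: decompose by the interference state of the fed-back (first) symbol versus the data symbol, use memorylessness of the Bernoulli process to get the $p(1-p)$ outage event, and sum the surviving rates against $f_n$ and $f_i$. You are in fact slightly more complete than the paper, which only computes the clean-feedback/interfered-data case explicitly and leaves the verification that the interfered-feedback branch never violates the error constraint implicit.
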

\begin{proof}

Let the instantaneous $\sinr$ of symbols not affected by impulsive interference and affected by impulsive interference in a coherence time interval be denoted as $\gamma_n$ and $\gamma_i$ respectively. So $\gamma_i=\gamma_n / (1+\mu),$ where $\mu=\sigma_i^2 / \sigma_n^2.$ Let the first symbol of a coherence time interval not be affected by impulsive interference as illustrated  in Fig. \ref{fig:bernoulli} in the time interval  $[ T_c, 5 T_s]$. If water-filling is utilized,  the symbols in the coherence time interval ($[T_c, 2 T_c]$) are transmitted at  power $P(\gamma_n)$=  $\frac{\bar{P}}{k}(\frac{1}{\eta}-\frac{1}{\gamma_n})$ and rate $\log_2(M(\gamma_n))$=$\log_2(\gamma_n / \eta)$. So the probability of error of a symbol affected by impulsive interference in this block (for example $[6 T_s,7 T_s]$  in Fig. \ref{fig:bernoulli}) on adaptation is 
\begin{align*}
P_b(\gamma_i) &= c e^{\frac{-1.5 \gamma_i P(\gamma_n)}{\Po (M(\gamma_n)-1)}} \\ &=  c e^{\frac{-1.5 \gamma_i (\frac{1}{\eta} - \frac{1}{\gamma_n})}{k (\frac{\gamma_n}{\eta}-1)}} \\  &= ce^{\frac{-1.5}{k(1+\mu)}} > P_b.
\end{align*}
As $P_b(\gamma_i)> P_b,$ this results in outage. Hence, outage occurs when the first symbol in a coherence slot is not affected by impulsive interference and  symbols within that coherence time interval are affected by impulsive interference. As the arrival process is memoryless,  the outage probability is $P_{o}=p(1-p)$.  So the average spectral efficiency is  

\ifCLASSOPTIONtwocolumn
\begin{multline}
 ((1-p)-p(1-p)) \int\limits_{\eta} ^\infty \log_2 \left(\frac{\gamma}{\eta} \right) f_n(\gamma) \d \gamma +\\ p \int\limits_{\eta} ^\infty \log_2 \left(\frac{\gamma}{\eta} \right) f_i(\gamma) \d \gamma,
\label{eqn:water_fill}
\end{multline}
\else
\begin{equation}
 ((1-p)-p(1-p)) \int\limits_{\eta} ^\infty \log_2 \left(\frac{\gamma}{\eta} \right) f_n(\gamma) \d \gamma +\\ p \int\limits_{\eta} ^\infty \log_2 \left(\frac{\gamma}{\eta} \right) f_i(\gamma) \d \gamma,
\label{eqn:water_fill}
\end{equation}
\fi
 where $\eta$ is obtained from (\ref{eqn:beta_n_solving}).
\end{proof}

We now consider two other alternative transmission strategies: The first strategy is to be aggressive and assume that impulsive interference is not present \ie, all symbols are affected by $\AWGN$ of variance $\sigma_n^2$ and adapt power and rate accordingly. The second alternative is to be conservative and assume that impulsive interference exists always \ie, all symbols are affected by $\AWGN$ of variance $\sigma_n^2 +\sigma_i^2$ and base the adaptation on this assumption. 

\subsection{Aggressive water-filling}
In this method we neglect the impulsive interference and water-filling is used assuming only the noise density function.  However we use $H$ instead of the $\sinr$ for adaptive modulation\footnote{Due to space constraint we omit the proof that in a Rayleigh faded channel both the approaches give the same average spectral efficiency}.  Since the interference and noise are zero mean processes, the channel $H$ can be estimated by  the receiver by averaging out the sequence of training symbols that the transmitter sends at the beginning of a coherence interval. This channel value $H$ is fed back to the transmitter and is used to adapt power and rate for the coherence interval.


As  in Section \ref{sec:realistic}, on power and rate adaptation, the probability of error is $P_b = c e^{\frac{-1.5 H P(H) }{ \sigma_n^2(M(H)-1)}}$. The objective is to maximize  the average spectral efficiency $\E_H[\log_{2} M(H)]$, where $M(H)=1+\frac{k H P(H)}{\overline{P}},$ subject to an average power constraint \[\int\limits_{0} ^\infty P(H) f(H) \d H \leq \overline{P},\] where $k$=$-\frac{1.5 \overline{P}}{\sigma_n^2 \ln( \frac{P_b}{c})}$ and $f(H)$ is the $\pdf$ of $H$. The solution to the above problem is similar to   the water-filling solution  in Section \ref{sec:realistic}. The water-filling threshold $\alpha_n$ is obtained by solving $\int\limits_{\alpha_n}^\infty (\frac{1}{\alpha_n} - \frac{1}{H})f(H) \d H =k.$
Similar to the proof of Theorem 1,  it can be seen that by using the threshold $\alpha_n$ for symbols affected by impulsive interference, will result in  an outage as the probability of error constraint will be violated. Hence the average spectral efficiency achieved by aggressive water-filling is 
\begin{equation*}
R_n=\int\limits_{\alpha_n} ^\infty (1-p)\log_{2} \left(\frac{H}{\alpha_n} \right)f(H) \d H.
\end{equation*}
Using the fact that $f(H) = \exp(-H)$,  we obtain  \[ \int\limits_{\alpha_n} ^\infty \log_2 \left(\frac{H}{\alpha_n} \right) f(H) \d H = \log_2(e) \left( \frac{ e^{-\alpha_n }}{\alpha_n} -k \right), \] where $k$=$-\frac{1.5 \overline{P}}{\sigma_n^2 \ln(\frac{ P_b}{c})}$.
Hence 
\begin{equation}
R_n=(1-p) \log_2(e) \left( \frac{  e^{-\alpha_n  }}{\alpha_n} -k  \right).
\label{eqn:aggressive}
\end{equation}

\subsection{Conservative water-filling}
In this method, we assume that the symbols are always affected by impulsive interference. 
As we assume impulsive interference exists always, the objective is to satisfy an instantaneous probability of error constraint $P_b = c e^{\frac{-1.5 H \overline{P}}{(\sigma_n^2+\sigma_i^2)(M-1)}}$. As seen in previous Section,  water-filling threshold $\alpha_i$ can be found from average power constraint, 
$ \int\limits_{\alpha_i}^\infty \left(\frac{1}{\alpha_i}-\frac{1}{H}\right) f(H) \d H=k,$ where $k$=$-\frac{1.5 \overline{P}}{(\sigma_n^2 + \sigma_i^2)\ln(\frac{P_b}{c})}$. As we are assuming the worst case scenario, outage does not occur and hence the average spectral efficiency achieved by conservative water-filling is 
\begin{align}
R_i &=\int\limits_{\alpha_i} ^\infty \log_{2} \left(\frac{H}{\alpha_i} \right)f(H) \d H .\\ &=\log_2(e) \left( \frac{ e^{-\alpha_i }}{\alpha_i} -k \right),
\label{eqn:conservative}
\end{align}
where $k$=$-\frac{1.5 \overline{P}}{(\sigma_n^2 + \sigma_i^2)\ln(\frac{P_b}{c})}$.

%

\section{Numerical Results }
 For  simulation, we generate exponential random variables (square of Rayleigh) with mean $\snr$ for 1-$p$ fraction of total symbols (one lakh symbols). Similarly $p$ fraction of total symbols are generated as exponential random variables with mean $\frac{\snr}{1+\mu}$. Using these generated values,  adaptive modulation algorithms (conventional, conservative and aggressive water-filling) as described in the previous sections are evaluated to  verify that  the average spectral efficiency obtained is same as derived in equations (\ref{eqn:water_fill}), (\ref{eqn:aggressive}) and (\ref{eqn:conservative}).   In practice, only discrete rate modulation can be implemented.  In the subsequent figures, the curves obtained through evaluating the algorithms by generating the random variables are termed as ``WF simul"  and the curves obtained by evaluating the equations are termed  as ``WF theory".  The discrete rate modulation as it is only a minor modification of the scheme described in the previous Sections and is described in detail in \cite{andrea} for the $\awgn$ case. 
 
 The theoretical and simulated rate curves for conventional (\ref{eqn:water_fill}), aggressive (\ref{eqn:aggressive}) and conservative (\ref{eqn:conservative}) water-filling schemes are plotted in Figure \ref{fig:FPC_1} and \ref{fig:FPC_3} for low $\mu$, $\snr$ and high $\mu$, $\snr$ case.  We first observe that the simulations match the theoretical expressions.
 At low $p$, \ie, low impulsive interference, we see that the aggressive approach of neglecting   interference in computing the water-filling threshold is better. On the other hand, at high $p$, we see that it is better off to always assume persistent interference in computing the threshold. 
We also observe that the aggressive or conservative schemes are always better than conventional water-filling \ie, $\max(R_n, R_i) > R_w$ $\forall$ $p$ $\in$ $(0,1)$. As there are no analytical closed form expressions to calculate water-filling thresholds $\eta$, $\alpha_n$ and $\alpha_i$ it is tough to analytically compare the average spectral efficiency  achieved by conventional water-filling ($R_w$) with the ones achieved by aggressive ($R_n$) and conservative ($R_i$) water-filling. These graphs (as counter examples) prove that conventional water-filling is not the optimal scheme with impulsive interference even when the interference is uncorrelated across time. Aggressive and conservative schemes are optimal when $p=0$ and $p=1$ respectively. The intuition to use aggressive and conservative schemes was to check if these could be better than conventional water filling at low and high values of $p$ respectively. But an interesting observation is that for all values of $p$ either of these two schemes is better than conventional water-filling. 

Let $p_{th}$  be the intersection point ($p$) of the aggressive and conservative achievable average rates. Also we observed that $p_{th}$ increases with $\mu$. This is because when impulsive interference power is high, the achievable average rate due to conservative scheme (\ref{eqn:conservative}) is very low. Hence for a wide range of $p$ values it is better to experience outage and transmit aggressively than to transmit conservatively.

\section{Conclusion}
In this letter, we have derived the average spectral efficiency of a Bernoulli-Gaussian impulsive interference channel achieved by three different water-filling schemes subject to an average power and instantaneous probability of error constraint. We observe that the conventional water filling is sub-optimal. In fact, even simpler schemes that neglect  interference (for computing water filling threshold) outperform conventional water filling solution for a range of parameters.

\ifCLASSOPTIONtwocolumn
\begin{figure}
\centering
\begin{tikzpicture}
\draw[help lines] (0,0);
\begin{axis}[%
width=7cm,
height=6cm,
scale only axis,
xmin=0,
xmax=1,
xlabel={$p$},
ymin=0,
ymax=0.5,
ylabel={Average spectral efficiency bps/Hz},
legend style={at={(0.01,0.01)},anchor=south west,draw=black,fill=white,legend cell align=left}
]


\addplot [
color=blue,
solid,
mark=solid,
mark options={solid}
]
table[row sep=crcr]{
0   0.4842\\      
.1   0.4246\\                
.2   0.3707\\                
.3   0.3237\\   
.4    0.2845\\    
.5   0.2544\\                
.6    0.2349\\    
.7  0.2281\\    
.8  0.2360\\                               
.9 0.2612\\               
1    0.3064\\      
};
\addlegendentry{\footnotesize{Conventional WF theory}};

\addplot [
color=blue,
solid,
mark=*,
mark options={solid}
]
table[row sep=crcr]{
0    0.4841\\    
.1   0.4245\\                
.2   0.3709\\     
.3    0.3235\\                 
.4    0.2844\\     
.5   0.2543\\          
.6   0.2351\\   
.7   0.2280\\    
.8  0.2362\\                                    
.9 0.2612\\          
1     0.3064\\            
};
\addlegendentry{\footnotesize{Conventional WF simul.}};

\addplot [
color=red,
solid,
mark=solid,
mark options={solid}
]
table[row sep=crcr]{
0    0.4842\\    
.1  0.4357\\         
.2   0.3873\\   
.3    0.3389\\    
.4   0.2905\\    
.5   0.2421\\    
.6  0.1937\\                                        
.7    0.1452\\       
.8  0.0968\\                                    
.9  0.0484\\         
1     0\\            
};
\addlegendentry{\footnotesize{Aggressive WF theory}};

\addplot [
color=red,
solid,
mark=star,
mark options={solid}
]
table[row sep=crcr]{
0    0.4842\\   
.1   0.4358\\    
.2   0.3873\\    
.3    0.3389\\    
.4   0.2905\\              
.5   0.2422\\         
.6   0.1937\\                                        
.7    0.1453\\      
.8   0.0970\\                                     
.9  0.0484\\      
1     0\\            
};
\addlegendentry{\footnotesize{Aggressive WF simul.}};

\addplot [
color=black,
solid,
mark=solid,
mark options={solid}
]
table[row sep=crcr]{
0    .3064\\    
.1   .3064\\         
.2    .3064\\   
.3     .3064\\    
.4    .3064\\    
.5    .3064\\    
.6   .3064\\                                        
.7     .3064\\       
.8   .3064\\                                    
.9   .3064\\         
1      .3064\\            
};
\addlegendentry{\footnotesize{Conservative WF theory}};

\addplot [
color=black,
solid,
mark=triangle,
mark options={solid}
]
table[row sep=crcr]{
0   0.3063\\   
.1   0.3064\\         
.2     0.3065\\   
.3      0.3063\\   
.4     0.3064\\   
.5     0.3063\\           
.6    0.3066\\    
.7   0.3063\\   
.8    0.3065\\   
.9   0.3064\\   
1     0.3064\\                                                                                                 
};
\addlegendentry{\footnotesize{Conservative WF simul.}};

\end{axis}
\end{tikzpicture}%
\caption{$\mu= \sigma_i^2 / \sigma_n^2= 0$ dB, $\snr$ = 0 dB.  }
\label{fig:FPC_1}
\end{figure}

\begin{figure}
\centering
\begin{tikzpicture}
\draw[help lines] (0,0);
\begin{axis}[%
width=7cm,
height=6cm,
scale only axis,
xmin=0,
xmax=1,
xlabel={$p$},
ymin=0,
ymax=1.8,
ylabel={Average spectral efficiency bps/Hz},
legend style={at={(0.42,0.58)},anchor=south west,draw=black,fill=white,legend cell align=left}
]


\addplot [
color=blue,
solid,
mark=solid,
mark options={solid}
]
table[row sep=crcr]{
0    1.7524\\    
.1    1.4910\\                
.2   1.2432\\    
.3   1.0105\\    
.4     0.7940\\   
.5    0.5955\\                
.6     0.4173\\    
.7   0.2624\\                                      
.8   0.1354\\                          
.9  0.0524\\                       
1   0.0957\\  
};
\addlegendentry{\footnotesize{Conventional WF theory}};

\addplot [
color=blue,
solid,
mark=*,
mark options={solid}
]
table[row sep=crcr]{
0     1.7521\\   
.1     1.4909\\        
.2    1.2431\\   
.3      1.0105\\   
.4    0.7941\\    
.5   0.5953\\   
.6    0.4175\\                            
.7  0.2621\\     
.8  0.1355\\                        
.9 0.0524\\                
1  0.0958\\              
};
\addlegendentry{\footnotesize{Conventional WF simul.}};

\addplot [
color=red,
solid,
mark=solid,
mark options={solid}
]
table[row sep=crcr]{
0   1.7524\\   
.1  1.5772\\    
.2  1.4019\\         
.3     1.2267\\   
.4   1.0514\\        
.5    0.8762\\    
.6   0.7010\\   
.7   0.5257\\                                               
.8   0.3505\\                                    
.9  0.1752\\          
1     0\\            
};
\addlegendentry{\footnotesize{Aggressive WF theory}};

\addplot [
color=red,
solid,
mark=star,
mark options={solid}
]
table[row sep=crcr]{
0    1.7523\\  
.1      1.5776\\   
.2    1.4017\\    
.3     1.2267\\         
.4    1.0509\\                 
.5    0.8762\\               
.6    0.7011\\     
.7   0.5258\\                                      
.8   0.3501\\                                          
.9    0.1752\\        
1     0\\            
};
\addlegendentry{\footnotesize{Aggressive WF simul.}};

\addplot [
color=black,
solid,
mark=solid,
mark options={solid}
]
table[row sep=crcr]{
0    0.0957\\
.1   .0957\\         
.2    .0957\\   
.3     .0957\\    
.4    .0957\\    
.5    .0957\\    
.6   .0957\\                                        
.7     .0957\\       
.8   .0957\\                                    
.9   .0957\\         
1      .0957\\            
};
\addlegendentry{\footnotesize{Conservative WF theory}};

\addplot [
color=black,
solid,
mark=triangle,
mark options={solid}
]
table[row sep=crcr]{
0   0.0956\\    
.1    0.0957\\    
.2     0.0957\\    
.3      0.0957\\   
.4     0.0958\\   
.5       0.0957\\        
.6      0.0956\\           
.7    0.0956\\     
.8     0.0956\\   
.9      0.0957\\      
1        0.0958\\  
};
\addlegendentry{\footnotesize{Conservative WF simul.}};

\end{axis}
\end{tikzpicture}%
\caption{$\mu=\sigma_i^2 / \sigma_n^2=20$ dB, $\snr = 10$ dB}
\label{fig:FPC_3}
\end{figure}

\else

\begin{figure}
\centering
\begin{tikzpicture}
\draw[help lines] (0,0);
\begin{axis}[%
width=7cm,
height=6cm,
scale only axis,
xmin=0,
xmax=1,
xlabel={$p$},
ymin=0,
ymax=0.5,
ylabel={Average spectral efficiency bps/Hz},
legend style={at={(0.01,0.01)},anchor=south west,draw=black,fill=white,legend cell align=left}
]


\addplot [
color=blue,
solid,
mark=solid,
mark options={solid}
]
table[row sep=crcr]{
0   0.4842\\      
.1   0.4246\\                
.2   0.3707\\                
.3   0.3237\\   
.4    0.2845\\    
.5   0.2544\\                
.6    0.2349\\    
.7  0.2281\\    
.8  0.2360\\                               
.9 0.2612\\               
1    0.3064\\      
};
\addlegendentry{\scriptsize{Conventional WF theory}};

\addplot [
color=blue,
solid,
mark=*,
mark options={solid}
]
table[row sep=crcr]{
0    0.4841\\    
.1   0.4245\\                
.2   0.3709\\     
.3    0.3235\\                 
.4    0.2844\\     
.5   0.2543\\          
.6   0.2351\\   
.7   0.2280\\    
.8  0.2362\\                                    
.9 0.2612\\          
1     0.3064\\            
};
\addlegendentry{\scriptsize{Conventional WF simul.}};

\addplot [
color=red,
solid,
mark=solid,
mark options={solid}
]
table[row sep=crcr]{
0    0.4842\\    
.1  0.4357\\         
.2   0.3873\\   
.3    0.3389\\    
.4   0.2905\\    
.5   0.2421\\    
.6  0.1937\\                                        
.7    0.1452\\       
.8  0.0968\\                                    
.9  0.0484\\         
1     0\\            
};
\addlegendentry{\scriptsize{Aggressive WF theory}};

\addplot [
color=red,
solid,
mark=star,
mark options={solid}
]
table[row sep=crcr]{
0    0.4842\\   
.1   0.4358\\    
.2   0.3873\\    
.3    0.3389\\    
.4   0.2905\\              
.5   0.2422\\         
.6   0.1937\\                                        
.7    0.1453\\      
.8   0.0970\\                                     
.9  0.0484\\      
1     0\\            
};
\addlegendentry{\scriptsize{Aggressive WF simul.}};

\addplot [
color=black,
solid,
mark=solid,
mark options={solid}
]
table[row sep=crcr]{
0    .3064\\    
.1   .3064\\         
.2    .3064\\   
.3     .3064\\    
.4    .3064\\    
.5    .3064\\    
.6   .3064\\                                        
.7     .3064\\       
.8   .3064\\                                    
.9   .3064\\         
1      .3064\\            
};
\addlegendentry{\scriptsize{Conservative WF theory}};

\addplot [
color=black,
solid,
mark=triangle,
mark options={solid}
]
table[row sep=crcr]{
0   0.3063\\   
.1   0.3064\\         
.2     0.3065\\   
.3      0.3063\\   
.4     0.3064\\   
.5     0.3063\\           
.6    0.3066\\    
.7   0.3063\\   
.8    0.3065\\   
.9   0.3064\\   
1     0.3064\\                                                                                                 
};
\addlegendentry{\scriptsize{Conservative WF simul.}};

\end{axis}
\end{tikzpicture}%
\caption{$\mu= \sigma_i^2 / \sigma_n^2= 0$ dB, $\snr$ = 0 dB}
\label{fig:FPC_1}
\end{figure}

\begin{figure}
\centering
\begin{tikzpicture}
\draw[help lines] (0,0);
\begin{axis}[%
width=7cm,
height=6cm,
scale only axis,
xmin=0,
xmax=1,
xlabel={$p$},
ymin=0,
ymax=0.5,
ylabel={Average spectral efficiency bps/Hz},
legend style={at={(0.42,0.58)},anchor=south west,draw=black,fill=white,legend cell align=left}
]


\addplot [
color=blue,
solid,
mark=solid,
mark options={solid}
]
table[row sep=crcr]{
0    0.4842\\     
.1    0.4193\\                 
.2   0.3568\\           
.3   0.2967\\   
.4     0.2395\\   
.5    0.1856\\   
.6    0.1354\\   
.7   0.0897\\                                    
.8  0.0496\\                             
.9  0.0175\\                     
1    0.0155\\  
};
\addlegendentry{\scriptsize{Conventional WF theory}};

\addplot [
color=blue,
solid,
mark=*,
mark options={solid}
]
table[row sep=crcr]{
0    0.4841\\   
.1    0.4195\\           
.2    0.3568\\   
.3     0.2966\\    
.4    0.2395\\    
.5   0.1857\\                  
.6   0.1354\\         
.7   0.0896\\    
.8 0.0495\\                               
.9 0.0176\\             
1     0.0156\\              
};
\addlegendentry{\scriptsize{Conventional WF simul.}};

\addplot [
color=red,
solid,
mark=solid,
mark options={solid}
]
table[row sep=crcr]{
0   0.4842\\   
.1  0.4357\\    
.2  0.3873\\    
.3    0.3389\\    
.4  0.2905\\            
.5   0.2421\\    
.6  0.1937\\                                        
.7    0.1452\\       
.8  0.0968\\                                    
.9  0.0484\\         
1     0\\            
};
\addlegendentry{\scriptsize{Aggressive WF theory}};

\addplot [
color=red,
solid,
mark=star,
mark options={solid}
]
table[row sep=crcr]{
0    0.4841\\   
.1    0.4357\\   
.2    0.3873\\   
.3     0.3390\\     
.4    0.2905\\             
.5    0.2419\\                
.6    0.1936\\    
.7    0.1452\\                                             
.8   0.0968\\                                         
.9   0.0485\\         
1     0\\            
};
\addlegendentry{\scriptsize{Aggressive WF simul.}};

\addplot [
color=black,
solid,
mark=solid,
mark options={solid}
]
table[row sep=crcr]{
0    .0155\\    
.1   .0155\\         
.2    .0155\\   
.3     .0155\\    
.4    .0155\\    
.5    .0155\\    
.6   .0155\\                                        
.7     .0155\\       
.8   .0155\\                                    
.9   .0155\\         
1      .0155\\            
};
\addlegendentry{\scriptsize{Conservative WF theory}};

\addplot [
color=black,
solid,
mark=triangle,
mark options={solid}
]
table[row sep=crcr]{
0   0.0155\\    
.1    0.0155\\    
.2     0.0156\\    
.3      0.0155\\    
.4     0.0155\\           
.5      0.0155\\    
.6     0.0155\\          
.7   0.0156\\  
.8     0.0156\\ 
.9     0.0156\\  
1       0.0156\\      
};
\addlegendentry{\scriptsize{Conservative WF simul.}};

\end{axis}
\end{tikzpicture}%
\caption{$\mu=\sigma_i^2 / \sigma_n^2=20$ dB, $\snr = 0$ dB}
\label{fig:FPC_2}
\end{figure}

\begin{figure}
\centering
\begin{tikzpicture}
\draw[help lines] (0,0);
\begin{axis}[%
width=7cm,
height=6cm,
scale only axis,
xmin=0,
xmax=1,
xlabel={$p$},
ymin=0,
ymax=1.8,
ylabel={Average spectral efficiency bps/Hz},
legend style={at={(0.42,0.58)},anchor=south west,draw=black,fill=white,legend cell align=left}
]


\addplot [
color=blue,
solid,
mark=solid,
mark options={solid}
]
table[row sep=crcr]{
0    1.7524\\    
.1    1.4910\\                
.2   1.2432\\    
.3   1.0105\\    
.4     0.7940\\   
.5    0.5955\\                
.6     0.4173\\    
.7   0.2624\\                                      
.8   0.1354\\                          
.9  0.0524\\                       
1   0.0957\\  
};
\addlegendentry{\scriptsize{Conventional WF theory}};

\addplot [
color=blue,
solid,
mark=*,
mark options={solid}
]
table[row sep=crcr]{
0     1.7521\\   
.1     1.4909\\        
.2    1.2431\\   
.3      1.0105\\   
.4    0.7941\\    
.5   0.5953\\   
.6    0.4175\\                            
.7  0.2621\\     
.8  0.1355\\                        
.9 0.0524\\                
1  0.0958\\              
};
\addlegendentry{\scriptsize{Conventional WF simul.}};

\addplot [
color=red,
solid,
mark=solid,
mark options={solid}
]
table[row sep=crcr]{
0   1.7524\\   
.1  1.5772\\    
.2  1.4019\\         
.3     1.2267\\   
.4   1.0514\\        
.5    0.8762\\    
.6   0.7010\\   
.7   0.5257\\                                               
.8   0.3505\\                                    
.9  0.1752\\          
1     0\\            
};
\addlegendentry{\scriptsize{Aggressive WF theory}};

\addplot [
color=red,
solid,
mark=star,
mark options={solid}
]
table[row sep=crcr]{
0    1.7523\\  
.1      1.5776\\   
.2    1.4017\\    
.3     1.2267\\         
.4    1.0509\\                 
.5    0.8762\\               
.6    0.7011\\     
.7   0.5258\\                                      
.8   0.3501\\                                          
.9    0.1752\\        
1     0\\            
};
\addlegendentry{\scriptsize{Aggressive WF simul.}};

\addplot [
color=black,
solid,
mark=solid,
mark options={solid}
]
table[row sep=crcr]{
0    0.0957\\
.1   .0957\\         
.2    .0957\\   
.3     .0957\\    
.4    .0957\\    
.5    .0957\\    
.6   .0957\\                                        
.7     .0957\\       
.8   .0957\\                                    
.9   .0957\\         
1      .0957\\            
};
\addlegendentry{\scriptsize{Conservative WF theory}};

\addplot [
color=black,
solid,
mark=triangle,
mark options={solid}
]
table[row sep=crcr]{
0   0.0956\\    
.1    0.0957\\    
.2     0.0957\\    
.3      0.0957\\   
.4     0.0958\\   
.5       0.0957\\        
.6      0.0956\\           
.7    0.0956\\     
.8     0.0956\\   
.9      0.0957\\      
1        0.0958\\  
};
\addlegendentry{\scriptsize{Conservative WF simul.}};

\end{axis}
\end{tikzpicture}%
\caption{$\mu=\sigma_i^2 / \sigma_n^2=20$ dB, $\snr = 10$ dB}
\label{fig:FPC_3}
\end{figure}

\fi

\begin{thebibliography}{1}

\bibitem{kapil}
K. Gulati, ``Radio Frequency Interference Modelling and Mitigation in Wireless Receivers," \emph{Ph.D. Dissertation, 2011.}

\bibitem{kamerman}
A. Kamerman, N. Erkocevic, ``Microwave oven interference on wireless LANs operating in the 2.4 GHz ISM band," \emph{The 8th lEEE International Symposium on Personal, Indoor and Mobile Radio Communications, 1997.}
\bibitem{capacity}
 H. V. Vu,  N. H. Tran,  T. V. Nguyen,  S. I. Hariharan , ``Estimating Shannon and Constrained Capacities of Bernoulli-Gaussian Impulsive Noise Channels in Rayleigh Fading
," \emph{IEEE Transactions on Communications, vol. 62, no. 6, June 2014.}

\bibitem{zimmer1}
M. Zimmermann, K. Dostert, ``A multipath model for the powerline channel," \emph{IEEE Transactions on Communications, Vol. 50, No. 4, April 2002.}

\bibitem{zimmer2}
M. Zimmermann, K. Dostert, ``Analysis and modeling of impulsive noise in broad-band powerline communications," \emph{IEEE Transactions on Electromagnetic Compatibility, Vol.44, No. 1, Feb 2002. }

\bibitem{meng}
H. Meng, Y. L. Guan, S. Chen, ``Modeling and analysis of noise
effects on broadband power-line communications," \emph{IEEE Transactions on Power Delivery, Vol. 20, No. 2, pp. 630-637, Apr. 2005.}
\bibitem{ma}
Y. H. Ma, P. L. So,  E. Gunawan, ``Performance Analysis of OFDM Systems for Broadband Power Line Communications Under Impulsive Noise and Multipath Effects," \emph{IEEE Transactions on Power Delivery}, vol. 20, no. 2, pp. 674-682, April 2005.
\bibitem{mawali}
K. S. A. Mawali, ``Techniques for Broadband Power Line Communications: Impulsive Noise Mitigation and Adaptive Modulation," \emph{Ph. D. Dissertation, 2011}. 
\bibitem{andrea1}
A. Goldsmith, S. G. Chua, ``Variable Rate Variable Power MQAM for Fading Channels," \emph{IEEE Transactions on Communication, vol. 45, no. 10, October 1997}.
\bibitem{taewon}
T. Park, J. Jang, O. S. Shin, K. B. Lee, ``Transmit Power Allocation for a
Downlink Two-User Interference Channel ," \emph{IEEE Transactions on Wireless Communications, vol. 9, no. 1, January 2005}.
\bibitem{anders}
A. Gjendemsjo, D. Gesbert, G. E. Oien, S. G. Kiani, ``Binary Power Control for Sum Rate Maximization
over Multiple Interfering Links," \emph{IEEE Transactions on Wireless Communications, vol. 7, no. 8, August 2008}.

\bibitem{monisha}
M. Ghosh, ``Analysis of the Effect of Impulse Noise on Muticarrier and Single Carrier QAM Systems," \emph{IEEE Transactions on Communications, vol. 44, no. 2, February 1996.}
\bibitem{andrea2}
M. S. Alouini, A. Goldsmith, ``Capacity of Rayleigh fading channels under different adaptive transmission and diversity-combining techniques," \emph{IEEE Transactions on Vehicular Technology, vol. 48, no. 4, July 1999.}
\bibitem{andrea}
A. Goldsmith, ``Wireless Communications," \emph{Cambridge University Press, 2005.}
\end{thebibliography}
\end{document}